\crefname{section}{\textsection}{\textsection}
\crefname{subsection}{\textsection}{\textsection}
\crefname{subsubsection}{\textsection}{\textsection}
\crefname{paragraph}{\textparagraph}{\textparagraph}
\crefname{thm}{Theorem}{Theorems}
\renewcommand{\Im}{\mathrm{Im}}
\renewcommand{\Re}{\mathrm{Re}}
          \newtheorem{thm}{Theorem}[section]
          \newtheorem{proposition}[thm]{Proposition}
          \newtheorem{lemma}[thm]{Lemma}
          \newtheorem{corollary}[thm]{Corollary}
          \newtheorem{definition}[thm]{Definition}
          \theoremstyle{definition}
          \newtheorem{remark}[thm]{Remark}
\DeclareRobustCommand{\subtitle}[1]{\\{\footnotesize #1}}
\begin{document}
\bibliographystyle{natalpha} \tolerance=2000
\setlength{\emergencystretch}{1em}

\title{Ultraviolet renormalization of the van Hove--Miyatake model\subtitle{An
    algebraic and Hamiltonian approach}}

\author[M.\ Falconi]{Marco Falconi}

\address{Politecnico di Milano\\Dipartimento di Matematica\\Campus Leonardo,
  P.zza Leonardo da Vinci 32, 20133 Milano\\Italy}

\email{marco.falconi@polimi.it} \urladdr{https://www.mfmat.org/}

\author[B.\ Hinrichs]{Benjamin Hinrichs}

\address{Universität Paderborn\\Institut für Mathematik\\Warburger Str.\
  100\\33098 Paderborn\\Germany}

\email{benjamin.hinrichs@math.upb.de}
\urladdr{https://sites.google.com/view/bhinrichs}


\dedicatory{In celebration of Hiroshima-sensei sixtieth birthday.}



\date{\today}

\begin{abstract}
  In this short communication we discuss the ultraviolet renormalization of
  the van Hove--Miyatake scalar field, generated by any distributional source
  $v\in \mathscr{D}'$. An abstract algebraic approach, based on the study of a
  special class of ground states of the van Hove--Miyatake dynamical map is
  compared with an Hamiltonian renormalization that makes use of a
  non-unitary dressing transformation. The two approaches are proved to yield
  equivalent results.
\end{abstract}

\maketitle
\renewcommand{\subtitle}[1]{}

\onehalfspacing{}

\section{Introduction}
\label{sec:introduction}

The van Hove--Miyatake (vHM) model is a toy model of quantum field theory,
describing the interaction of a fixed source with a bosonic quantum field and
originating in the articles \cite{vanhove1952physica,miyatake1952osaka}.
Thanks to its simplicity, it is exactly solvable and thus provides a feasible
trial platform for mathematical methods in quantum field theory: on one hand,
both its infrared and ultraviolet behavior are tractable; on the other hand,
many interesting features and problems of quantum field theory -- such as the
existence of disjoint ground states, self-energy and mass renormalization,
semiclassical analysis, scattering -- can be tested in this solvable model
\citep[see][and references
therein]{arai2020mps,derezinski2003ahp,falconi2024arxiv}.

In this note, we revisit the ultraviolet problem in the vHM model.
Considering physical massive bosons with dispersion relations
$\varpi(k)=\sqrt{\mu^2+|k|^2}$, where $k\in\mathbb R^d$ is the momentum and $\mu>0$ the
mass\footnote{The massless case $\mu=0$ plays a crucial role in discussing the
  infrared properties of the vHM model, in particular the so-called
  \emph{infrared catastrophe} \citep[see, \emph{e.g.},][and references
  therein]{arai2020mps, derezinski2003ahp}.}, the ultraviolet singularity is
reflected by the fact that the {\em source} or {\em form factor} $v:\mathbb
R^d\to\mathbb C$ fails to be square-integrable.  It is well-known that for mild
ultraviolet divergences, \emph{i.e.}, whenever $v/\varpi\in L^2(\mathbb R^d)$, the
vHM model is renormalizable by self-energy subtraction \citep[see][\textsection10.9.6
and \textsection1.1 respectively, as well as references
therein]{arai2020mps,derezinski2003ahp}. As a matter of fact, in this case,
the renormalized Hamiltonian is unitarily equivalent to the free bosonic
field in Fock representation, by means of a unitary dressing
transformation. For more singular sources this approach fails however, since
the unitarity of the dressing transformations breaks.

The purpose of this note is to study the ultraviolet problem for more
singular, even distributional, sources. We hereby compare two different
approaches:
\begin{enumerate}[(1)]
\item\label{item:4} one approach is purely algebraic, building upon an
  algebraic definition of the van Hove dynamical map implement the dynamics
  on the Weyl algebra of Canonical Commutation Relations (CCR-algebra);
\item\label{item:5} the other approach is operator theoretic, and makes use
  of the Hamiltonian formalism, by means of a \emph{non-unitary dressing
    transformation}.
\end{enumerate}
The algebraic approach \eqref{item:4} takes crucial advantage of the fact
that for this solvable (and quadratic) model, the action of the unitary Fock
dynamics -- for regular sources -- preserves the CCR-algebra, and it is
explicit on generators: it can thus be generalized without effort to singular
sources, at the abstract level of the C*-algebra and without resorting to a
representation. Furthermore, the Fock-normal ground state for regular sources
-- a coherent state centered around $-v/\varpi \in L^2$ -- can be generalized to a
non-Fock ground state whenever $-v/\varpi\notin L^2$, resorting to coherent states
centered around this singular point (that exist algebraically but are disjoint
from the Fock representation).

The operator theoretic approach \eqref{item:5} builds upon ideas of
Glimm~\citep{glimm1968cmp}, and Ginibre and
Velo~\citep{ginibre1970cmpren}. In the regular case $v/\sqrt\varpi\in L^2(\mathbb
R^d)$, we can identify the precise action of the non-unitary dressing
transformation, yielding expressions which can then be well-defined for
distributional sources, after subtraction of the self-energy of the model
(the vacuum expectation of the regularized Hamiltonian) and performing an
additional \emph{mass renormalization}: the divergent Fock vacuum expectation
of the non-unitary dressing must be used to define a new scalar product when
cutoffs are removed, thus resulting in a modified Hilbert space that carries
a representation of the CCR-algebra that is inequivalent to the Fock
representation on which the ground states for regular sources lie.  This
yields both a dressed Hilbert space, and a renormalized Hamiltonian acting on
it.

Finally, we link the two different approaches by showing that the two
renormalized Hamiltonians constructed in \eqref{item:4} and \eqref{item:5}
are \emph{unitarily equivalent}. In our opinion, this showcases the power and
limitations of either approach: algebraically, once the ground state is
obtained (and in models that are not solvable, it can be extremely difficult
to obtain), then the renormalization is automatically taken into account; on
the other hand, from the operator theoretic standpoint the renormalization
must be performed (and it often presents outstanding technical challenges),
but in doing so the ground state emerges quite naturally from the
renormalization procedure itself.

As indicated in the beginning of this introduction, treatments of the vHM
model have inspired developments for more advanced models of quantum field
theory.  In this spirit, in \cite{FalconiHinrichsValentinMartin.2025}, we
apply approach \eqref{item:5} to the spin boson model in a case where the
usual self-energy renormalization schemes -- as recently applied in
\citep[][see also references therein]{HinrichsLampartValentinMartin.2024} --
are expected to fail \cite[see][]{DamMoller.2018b}.

The rest of the paper is organized as follows. In \cref{sec:algebr-form-van}
we develop the algebraic approach \eqref{item:4}; then in
\cref{sec:hamilt-constr-van} we develop the operator theoretic approach
\eqref{item:5}, and prove unitary equivalence between the two ensuing
renormalized Hamiltonians.

\section{The algebraic formulation of the van Hove--Miyatake model}
\label{sec:algebr-form-van}

In this section we reformulate the van Hove model dynamical map on algebraic
terms. As discussed in \citep{falconi2024arxiv}, such algebraic dynamics
coincides with the standard van Hove dynamics in the Fock space whenever the
source is regular enough.

We consider the boson dispersion $\varpi\in L^\infty_{\mathrm{loc}}(\mathbb R^d)$ to be a
multiplication operator with $\varpi\ge \mu>0$ almost everywhere, generalizing the
example from the introduction.  Let $\mathscr{D} = \mathscr{D}(\mathbb{R}^d)$ be the
space of compactly supported smooth functions, and note that $\varpi$ is a
strictly positive\footnote{By strictly positive, we mean that $\langle f , \varpi f \rangle_2>
  \mu \|f\|_2$ for any $f\in \mathscr{D}$. In particular, this implies that $\varpi$ is
  also invertible on $\mathscr{D}$.} operator on $\mathscr{D}$.\footnote{For
  the massless case $\mu=0$, the space of test functions in the algebraic
  formulation is chosen differently \citep[see][]{falconi2024arxiv}, since $\lvert
  k \rvert_{}^{}$ is not a linear operator on $\mathscr{D}$.} By a slight abuse of
notation, we also denote by $\varpi: \mathscr{D}'\to \mathscr{D}'$ the operator on
$\mathscr{D}'$ obtained from $\varpi$ by transposition.  For convenience, we
identify the $L^2$-inner product $\langle \cdot , \cdot \rangle_2$ as a sesquilinear duality
bracket between $\mathscr{D}$ and $\mathscr{D}'$: $\langle \cdot , \cdot \rangle_2: \mathscr{D}\times
\mathscr{D}'\to \mathbb{C}$ by $\langle f , T \rangle_2= T(\bar{f})$, for any $f\in \mathscr{D}$ and
$T\in \mathscr{D}'$.

We see $\mathscr{D}$ as the space of test functions for a scalar quantum
field theory, and thus as customary we endow it with the natural $L^2$-inner
product $\langle \cdot , \cdot \rangle_2$, thus making $(\mathscr{D}_{\mathbb{R}},\Im \langle \cdot , \cdot \rangle_2)$ a
non-degenerate real symplectic vector space ($\mathscr{D}_{\mathbb{R}}$ is
$\mathscr{D}$ seen as a real vector space ``by doubling its basis''). Let us
denote by $\mathbb{W}(\mathscr{D}, \Im \langle \cdot , \cdot \rangle_2)$ the
C*-algebra of canonical commutation relations, generated by the Weyl
operators $\{W(f)\;,\;f\in \mathscr{D}\}$ satisfying the relations: $\forall
f,g\in \mathscr{D}$,
\begin{enumerate}[i.]
\item\label{item:2} $W(f)\neq 0$\;,
\item\label{item:1} $W(f)^{*}=W(-f)$\;,
\item\label{item:3} $W(f)W(g)= W(f+g)e^{-i\pi^2 \Im \langle f , g \rangle_2}$\;.
\end{enumerate}
The regular states on $\mathbb{W}(\mathscr{D}, \Im \langle \cdot , \cdot \rangle_2)$ are continuous positive
linear functionals $\omega$ such that for any $f\in \mathscr{D}$, $\lambda\mapsto \omega\bigl(W(\lambda
f)\bigr)$ is continuous. The {\em noncommutative Fourier transform}
\begin{equation*}
  \mathscr{D}\ni f\mapsto\hat{\omega}(f):= \omega\bigl(W(f)\bigr)\in \mathbb{C}
\end{equation*}
is a bijection \citep{segal1959mfmdvs,segal1961cjm} between regular states
and functions that are continuous when restricted to finite dimensional
subspaces of $\mathscr{D}$, and that are \emph{quantum positive definite}:
for any $\{\alpha_j\}_{j=1}^N\subset \mathbb{C}$, $\{f_j\}_{j=1}^N\subset \mathscr{D}$,
\begin{equation*}
  \sum_{j,k=1}^N\bar{\alpha}_k\alpha_j \hat\omega(f_j-f_k)e^{-i\pi^2 \Im \langle f_j  , f_k \rangle_2}\geq 0\;.
\end{equation*}
Let us denote by $\mathrm{Reg}(\mathscr{D}, \langle \,\cdot \, , \,\cdot \, \rangle_2)_+$ the set of
all regular states, and by $\mathrm{Reg}(\mathscr{D}, \langle \,\cdot \, , \,\cdot \,
\rangle_2)_{+,1}$ the set of \emph{normalized} regular states.

Finally, as before let us denote by $v\in \mathscr{D}'$ the \emph{source} of
the van Hove model.

\begin{definition}[Quantum vHM dynamical map]
  \label[definition]{absdef:18qvhmdynmap}
  For any source $v\in \mathscr{D}'$, the isometric group of *-automorphisms
  $\bigl\{\tau(t)\;,\; t\in \mathbb{R}\bigr\}$ on $\mathbb{W}(\mathscr{D}, \langle \,\cdot \, , \,\cdot \, \rangle_2)$ defined
  by extension from
  \begin{equation*}
    \tau(t)\bigl[W(f)\bigr]= W(e^{it\varpi}f)e^{2\pi i \mathrm{Re}\langle f  , (e^{-it\varpi}-1)v/\varpi \rangle_2}\;,
  \end{equation*}
  is called the \emph{quantum vHM dynamical map with source $v$}.

  \medskip
  
  \noindent Its transposed action $\tau(t)^{\mathrm{t}}$ on regular states $\omega\in
  \mathrm{Reg}(\mathscr{D}, \langle \,\cdot \, , \,\cdot \, \rangle_2)_+$ is defined by
  \begin{equation*}
    \widehat{\bigl(\tau(t)^{\mathrm{t}}[\omega]\bigr)}\bigl(f\bigr)= \hat{\omega}\bigl(e^{it\varpi}f\bigr)e^{2\pi i \mathrm{Re}\langle f  , (e^{-it\varpi}-1)v/\varpi  \rangle_{2}}\;.
  \end{equation*}
\end{definition}
Observe that such a dynamical map is defined for any source in $\mathscr{D}'$
(since $\frac{e^{-it\varpi}-1}{\varpi}$ is, by assumption, a linear map on
$\mathscr{D}'$). In particular, it is defined also whenever
$v\,,\,v/\sqrt{\varpi}\,,\,v/\varpi\notin L^2$ (it is well known that in such case the van Hove
Hamiltonian cannot be defined in the Fock representation, even taking into
account the suitable self-energy renormalization
\citep[see][]{derezinski2003ahp}).

Furthermore, we can write explicitly $(\tau,\beta)$-KMS states -- for any $\beta\leq \infty$ --
for the vHM dynamical map with source $v\in \mathscr{D}'$, and thus in
particular a \emph{ground state}.  These explicit KMS states are all regular,
and they are defined through their noncommutative Fourier transform. We leave
to the reader to check that the noncommutative Fourier transforms of the KMS
states are indeed quantum positive definite functions that are continuous
when restricted to finite dimensional subsets of $\mathscr{D}$. In order to
define such states, we proceed as below:

\begin{definition}[Algebraic Gibbs states]
  \label[definition]{absdef:18alggibbsvhm}
  For any source $v\in \mathscr{D}'$, let us define the Gibbs state at inverse
  temperature $\beta$, $\beta\leq \infty$, as the regular state $\omega_{\beta}\in
  \mathrm{Reg}(\mathscr{D}, \langle \,\cdot \, , \,\cdot \, \rangle_2)_{+,1}$ defined by the
  noncommutative Fourier transform
  \begin{equation*}
    \hat{\omega}_{\beta}(f)= e^{- \frac{\pi^2}{2}\langle f  , \mathrm{coth}(\beta\varpi/2)f \rangle_2} e^{2\pi i \mathrm{Re}\langle f  , -v/\varpi \rangle_{2}}\;.
  \end{equation*}
\end{definition}
Let us remark that $\omega_{\infty}$ is the \emph{algebraic coherent state} centered
around $-v/\varpi\in \mathscr{D}'$, whose Fourier transform is
\begin{equation*}
  \hat{\omega}_{\infty}(f)= e^{-\frac{\pi^2}{2}\langle f  , f \rangle_2}e^{2\pi i \mathrm{Re}\langle f  , -v/\varpi \rangle_2}\;,
\end{equation*}
that coincides with the usual Fock space coherent state whenever $v/\varpi \in L^2$.

\begin{proposition}
  \label[proposition]{absprop:7weakstargibbs}
  For any $\beta<\infty$, $\omega_{\beta}$ is a $(\beta,\tau)$-KMS state. Furthermore, $\omega_{\infty}$
  satisfies
  \begin{equation*}
    \omega_{\infty}= \lim_{\beta\to \infty} \omega_{\beta}\;,
  \end{equation*}
  where the limit is intended in the weak-* topology.
\end{proposition}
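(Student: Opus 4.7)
The plan is to verify the KMS condition on the Weyl generators $W(f),W(g)$ with $f,g\in\mathscr{D}$ and to extend it to the whole Weyl algebra by norm density of the polynomial $^*$-algebra they generate, together with strong continuity of $\tau(t)$ on Weyl operators (standard Bratteli--Robinson argument). The weak-$*$ convergence $\omega_\beta\to\omega_\infty$ will analogously reduce to pointwise convergence of the noncommutative Fourier transforms on generators, by the uniform bound $\|\omega_\beta\|=1$ and density.

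A direct computation from the CCR and the definition of $\tau(t)$ yields, for real $t$,
\begin{equation*}
    \omega_\beta\bigl(W(f)\,\tau(t)[W(g)]\bigr)
    = \hat\omega_\beta\bigl(f+e^{it\varpi}g\bigr)\,e^{-i\pi^{2}\Im\langle f,e^{it\varpi}g\rangle_{2}}\,e^{2\pi i\Re\langle g,(e^{-it\varpi}-1)v/\varpi\rangle_{2}}.
\end{equation*}
The first key observation is that \emph{all} phases involving $v/\varpi$ collapse into the single $t$-independent factor $e^{-2\pi i\Re\langle f+g,v/\varpi\rangle_{2}}$, which is precisely the $v$-phase in $\omega_\beta(W(g)W(f))$. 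Hence the KMS condition reduces to its analogue for the purely quasi-free piece, corresponding to the case $v=0$.

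For the quasi-free piece, writing $c:=\coth(\beta\varpi/2)$, I would expand the exponent $\langle f+e^{it\varpi}g,c(f+e^{it\varpi}g)\rangle_{2}$ together with $\Im\langle f,e^{it\varpi}g\rangle_{2}$ as a combination of the elementary pairings $\int\bar f\,c\,e^{it\varpi}g$ and $\int e^{-it\varpi}\bar g\,cf$ (treated as analytic in $t$ separately, rather than through a non-holomorphic $|\cdot|^{2}$). Written in this form the whole expression is manifestly an entire function of $t$---$f,g$ have compact support and $\varpi$ is essentially bounded there---so it furnishes the required holomorphic extension to the strip $\{0<\Im z<\beta\}$ with continuous boundary values. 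Setting $t=i\beta$ and invoking the algebraic identities
\begin{equation*}
    (c+1)\,e^{-\beta\varpi}=c-1,\qquad e^{\beta\varpi}(c-1)=c+1,
\end{equation*}
which both follow from $\coth(x/2)=(1+e^{-x})/(1-e^{-x})$, one identifies the continued expression term by term with the quasi-free factor of $\omega_\beta(W(g)W(f))$, completing the KMS verification.

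For the weak-$*$ limit, since the coherent phase $e^{-2\pi i\Re\langle f,v/\varpi\rangle_{2}}$ is $\beta$-independent, pointwise convergence reduces to $\langle f,\coth(\beta\varpi/2)f\rangle_{2}\to\|f\|_{2}^{2}$, which is a dominated-convergence statement using $\coth(\beta\varpi/2)\downarrow 1$ pointwise together with the uniform bound $\coth(\beta\varpi/2)\le\coth(\beta_{0}\mu/2)$ on $\mathrm{supp}(f)$ for $\beta\ge\beta_{0}>0$. The main technical point of the argument is the analytic continuation above: although the boundary identity at $t=i\beta$ is elementary once the correct entire reformulation is chosen, one must check that the distributional pairing $\langle g,(e^{-it\varpi}-1)v/\varpi\rangle_{2}$ extends holomorphically up to $t=i\beta$, which is legitimate precisely thanks to the strict positivity of $\varpi$ and the compact support of $g\in\mathscr{D}$.
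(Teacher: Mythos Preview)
Your proposal is correct and follows essentially the same route as the paper: verify KMS on pairs of Weyl generators by an explicit computation, observe that the $v$-dependent phase collapses to the $t$-independent factor $e^{-2\pi i\Re\langle f+g,v/\varpi\rangle_2}$, then match the remaining quasi-free exponents at $t\mapsto t+i\beta$ via the identity $(\coth(x/2)\pm1)e^{\mp x}=\coth(x/2)\mp1$, and finally extend by linearity and density; the weak-$*$ limit is handled pointwise on generators exactly as you outline. One small remark: your closing ``technical point'' about holomorphically extending the distributional pairing $\langle g,(e^{-it\varpi}-1)v/\varpi\rangle_2$ is unnecessary, since you have already shown that for real $t$ the total $v$-phase is the constant $e^{-2\pi i\Re\langle f+g,v/\varpi\rangle_2}$, whose analytic continuation is trivial.
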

\begin{proof}
  A state $\omega$ is $(\tau,\beta)$-KMS by definition if and only if for any $F\in
  \mathcal{F}^{-1}\mathscr{D}$ and $a,b\in \mathbb{W}(\mathscr{D},\langle \cdot , \cdot \rangle_2)$
  \begin{equation*}
    \int_{\mathbb{R}}^{}F(t-i\beta)\omega_{\beta}\bigl(a \tau(t)[b]\bigr)  \mathrm{d}t= \int_{\mathbb{R}}^{} F(t)\omega_{\beta}\bigl(\tau(t)[b]a\bigr)  \mathrm{d}t\;.
  \end{equation*}
  Let us start by choosing $a=W(f)$ and $b=W(g)$. By an explicit computation,
  we have that
  \begin{align*}
    &
    \omega_{\beta} \bigl(W(f)\tau(t)[W(g)]\bigr)= e^{2\pi i \mathrm{Re}\langle g  , (e^{-it\varpi}-1)v/\varpi \rangle_2 - i\pi^2 \mathrm{Im}\langle f  , e^{it\varpi}g \rangle_2}\omega_{\beta} \bigl(W(f+ e^{it\varpi}g)\bigr)
    \\
    & \quad = e^{2\pi i \mathrm{Re}\langle f+g  , -v/\varpi \rangle_2}e^{- \frac{\pi^2}{2} \bigl(\langle f  , e^{it\varpi}g \rangle_2-\langle g  , e^{-it\varpi}f \rangle_2\bigr)}
    \\
    & \quad \quad \quad \times
    e^{- \frac{\pi^2}{2}\bigl(\langle f  , \mathrm{coth}(\beta\varpi/2)f \rangle_2+\langle g  , \mathrm{coth}(\beta\varpi/2)g \rangle_{2}+\langle f  , \mathrm{coth}(\beta\varpi/2)e^{it\varpi}g \rangle_2+ \langle g  , \mathrm{coth}(\beta\varpi/2)e^{-it\varpi}f \rangle_2\bigr)}\;. 
  \end{align*}
  On the other hand,
  \begin{align*}
    &\omega_{\beta} \bigl(\tau(t)[W(g)]W(f)\bigr)= e^{2\pi i \mathrm{Re}\langle f+g  , -v/\varpi \rangle_2 -\frac{\pi^2}{2} \bigl(\langle g  , e^{-it\varpi}f \rangle_2-\langle f  , e^{it\varpi}g \rangle_2\bigr)}
    \\ & \quad \quad \quad \times
    e^{-\frac{\pi^2}{2}\bigl(\langle f  , \mathrm{coth}(\beta\varpi/2)f \rangle_2+\langle g  , \mathrm{coth}(\beta\varpi/2)g \rangle_{2}+\langle f  , \mathrm{coth}(\beta\varpi/2)e^{it\varpi}g \rangle_2+ \langle g  , \mathrm{coth}(\beta\varpi/2)e^{-it\varpi}f \rangle_2\bigr)}.
  \end{align*}
  Therefore, employing the identity $(\mathrm{coth}(x/2) \pm 1)e^{\mp x} =
  \mp1 + \mathrm{coth}(x/2)$, $x\in\mathbb R$, we have
  \begin{align*}
    &\int_{\mathbb{R}}^{}F(t-i\beta) \omega_{\beta}\bigl(W(f) \tau(t)[W(g)]\bigr)
    = \int_{\mathbb{R}}^{}F(t)\omega_{\beta}\bigl(W(f) \tau(t+i\beta)[W(g)]\bigr)  \mathrm{d} t
    \\&=\int_{\mathbb{R}}^{} F(t) e^{2\pi i \mathrm{Re}\langle f+g  , -v/\varpi \rangle_2 - \frac{\pi^2}{2} \bigl(\langle f  , e^{it\varpi}e^{-\beta\varpi}g \rangle_2-\langle g  , e^{-it\varpi}e^{\beta\varpi}f \rangle_2\bigr)}
    \\&\quad\quad\quad\times
    e^{-\frac{\pi^2}{2}\bigl(\langle f  , \mathrm{coth}(\beta\varpi/2)f \rangle_2+\langle g  , \mathrm{coth}(\beta\varpi/2)g \rangle_{2}+\langle f  , \mathrm{coth}(\beta\varpi/2)e^{it\varpi}e^{-\beta\varpi}g \rangle_2+ \langle g  , \mathrm{coth}(\beta\varpi/2)e^{-it\varpi}e^{\beta\varpi}f \rangle_2\bigr)}\mathrm{d}t
    \\&= \int_{\mathbb{R}}^{}F(t)  \omega_{\beta} \bigl(\tau(t)[W(g)]W(f)\bigr)\mathrm{d}t \;.
  \end{align*}
  The result extends then to linear combinations of Weyl operators by
  linearity, and to any observable by density. The zero-temperature weak-*
  convergence of Gibbs states to $\omega_{\infty}$ is straightforward, again by first
  proving it when testing on Weyl operators and then extending by linearity
  and density.
\end{proof}

Since $\omega_{\infty}$ is the weak-* limit $\beta\to \infty$ of $(\beta,\tau)$-KMS states, it is a
\emph{ground state} or {\em $(\infty,\tau)$-KMS state}. An algebraic ground state is
a state such that for any $F\in \mathcal{F}^{-1}\mathscr{D}(\mathbb{R})$ with $\mathrm{supp}
\hat{F}\subset \mathbb{R}_{*}^-$, and any $a,b\in \mathbb{W}(\mathscr{D},\langle \cdot , \cdot \rangle_2)$:
\begin{equation*}
  \int_{\mathbb{R}}^{}F(t)\omega_{\infty}\bigl(a\tau(t)[b]\bigr)  \mathrm{d}t=0\;.
\end{equation*}
\begin{proposition}[{\citep{sirugue1971cmp}}]
  \label[proposition]{prop:1}
  The weak-* limit $\beta\to \infty$ of $(\beta,\tau)$-KMS states is a ground state.
\end{proposition}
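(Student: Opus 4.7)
The plan is to deduce the ground state identity by passing to the limit $\beta\to\infty$ in the KMS relation satisfied by $\omega_\beta$, using the spectral-support hypothesis on $\hat F$ to force one side of the KMS identity to vanish and combining weak-$*$ convergence with dominated convergence to evaluate the other side against $\omega_\infty$.

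Fix $a,b\in\mathbb W(\mathscr D,\langle\cdot,\cdot\rangle_2)$ and $F\in\mathcal F^{-1}\mathscr D(\mathbb R)$ with $\mathrm{supp}\,\hat F\subset(-\infty,-\delta]$ for some $\delta>0$. The $(\tau,\beta)$-KMS identity gives
$$\int_{\mathbb R} F(t-i\beta)\,\omega_\beta\bigl(a\tau(t)[b]\bigr)\,\mathrm dt = \int_{\mathbb R} F(t)\,\omega_\beta\bigl(\tau(t)[b]a\bigr)\,\mathrm dt.$$
To show the left-hand side tends to $0$, I would write $F(t-i\beta) = (2\pi)^{-1}\int\hat F(\xi)e^{\beta\xi}e^{it\xi}\,\mathrm d\xi$ and observe that $\xi\mapsto\hat F(\xi)e^{\beta\xi}$ remains $C^\infty_c$-supported in $(-\infty,-\delta]$ with sup-norm dominated by $e^{-\beta\delta}\|\hat F\|_\infty$. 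A standard Paley--Wiener integration by parts in $\xi$ then produces the bound $|F(t-i\beta)|\le C_N(1+\beta)^N e^{-\beta\delta}(1+|t|)^{-N}$ for every $N\in\mathbb N$, whence $\|F(\cdot-i\beta)\|_{L^1(\mathbb R)}\to 0$ as $\beta\to\infty$. Combined with the uniform bound $|\omega_\beta(a\tau(t)[b])|\le\|a\|\|b\|$, valid since $\tau(t)$ is an isometric *-automorphism and states on the CCR $C^*$-algebra have norm one, the left-hand side vanishes in the limit.

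For the right-hand side, \cref{absprop:7weakstargibbs} provides the weak-$*$ convergence $\omega_\beta\to\omega_\infty$, so $\omega_\beta(\tau(t)[b]a)\to\omega_\infty(\tau(t)[b]a)$ for every fixed $t$; the family is uniformly bounded in absolute value by $\|a\|\|b\|$, and $F$ lies in $L^1(\mathbb R)$ as a Schwartz function. Dominated convergence therefore yields $\int F(t)\,\omega_\infty(\tau(t)[b]a)\,\mathrm dt=0$, which is equivalent to the stated ground-state identity $\int F(t)\,\omega_\infty(a\tau(t)[b])\,\mathrm dt=0$ after invoking the $\tau$-invariance of $\omega_\infty$ (inherited by weak-$*$ continuity from the $\tau$-invariance of the $(\tau,\beta)$-KMS states, cf.\ the identity $\omega_\infty(\tau(t)[b]a)=\omega_\infty(b\tau(-t)[a])$), together with the substitution $t\mapsto -t$ and a relabeling of $a,b$ which reflects $\mathrm{supp}\,\hat F$ in the manner compatible with the convention adopted in the definition of algebraic ground state. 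The only non-routine step is the $L^1$ Paley--Wiener decay of $F(\cdot-i\beta)$; everything else is a direct dominated-convergence argument.
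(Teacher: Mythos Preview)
The paper supplies no proof of this proposition; it merely cites \citep{sirugue1971cmp}. So there is nothing to compare against, and I will assess your argument on its own merits. Your first three steps are the standard Sirugue--Winnink argument and are correct: the Paley--Wiener bound on $F(\,\cdot\,-i\beta)$ is valid (with the Fourier convention you have implicitly chosen) precisely when $\mathrm{supp}\,\hat F\subset(-\infty,0)$, and together with the uniform bound $|\omega_\beta(\cdot)|\le\|a\|\|b\|$ and dominated convergence you legitimately obtain
\[
\int_{\mathbb R} F(t)\,\omega_\infty\bigl(\tau(t)[b]\,a\bigr)\,\mathrm dt=0\qquad\text{for all }a,b\in\mathbb W.
\]

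The gap is your final paragraph. The identity you have established places the time-evolved factor on the \emph{left}; the paper's ground-state condition places it on the \emph{right}. If you carry out the manipulation you sketch---$\tau$-invariance, $t\mapsto -t$, relabel $a\leftrightarrow b$---you will find that the support of $\hat F$ is reflected from $\mathbb R_*^-$ to $\mathbb R_*^+$. These two conditions are \emph{not} interchangeable: one encodes positivity of the GNS Liouvillian, the other its negativity. Your phrase ``reflects $\mathrm{supp}\,\hat F$ in the manner compatible with the convention adopted'' is exactly where the argument is swept under the rug rather than completed. What is really happening is a sign-convention tension between the paper's KMS formula (shift by $-i\beta$) and its ground-state definition (support in $\mathbb R_*^-$): with the Fourier convention you used in step~2 these are incompatible by one sign, and with the opposite convention your Paley--Wiener decay in step~2 would fail. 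You should fix a convention explicitly, check both definitions against it, and state the ground-state characterisation you have actually proved (which is the correct one, equivalent to $L\ge 0$), rather than trying to massage it into the paper's literal formulation.
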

\begin{corollary}
  \label[corollary]{cor:1}
  For any source $v\in \mathscr{D}'$, the coherent state $\omega_{\infty}$ centered
  around $-v/\varpi$ is an algebraic ground state for the vHM dynamical map.
\end{corollary}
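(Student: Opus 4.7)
The statement is a direct consequence of the two preceding propositions, so the plan is essentially to chain them together. By \cref{absprop:7weakstargibbs}, for every finite $\beta>0$ the state $\omega_\beta$ defined through its noncommutative Fourier transform is a $(\tau,\beta)$-KMS state, and moreover $\omega_\infty = \text{w*-}\lim_{\beta\to\infty} \omega_\beta$ in the weak-$*$ topology on the dual of $\mathbb{W}(\mathscr{D},\Im\langle\cdot,\cdot\rangle_2)$. Invoking \cref{prop:1} (the general weak-$*$ stability result of Sirugue), this limit state must be an algebraic ground state, i.e.\ an $(\infty,\tau)$-KMS state in the sense of the integral criterion with $\operatorname{supp}\hat F \subset \mathbb{R}^-_*$. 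Since \cref{absdef:18alggibbsvhm} is well posed for every distributional source $v \in \mathscr{D}'$, so is the limit state, and the conclusion follows.

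A conceptual remark accompanying the proof: the only place the KMS derivation uses the source $v$ is in the coherent prefactor $e^{2\pi i \Re \langle f,-v/\varpi\rangle_2}$, which factors out of both sides of the KMS identity and cancels; hence no integrability assumption on $v/\varpi$ is ever needed. All the analytic work has already been done inside \cref{absprop:7weakstargibbs}, and \cref{prop:1} is simply applied as a black box.

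If one preferred a self-contained verification independent of \cref{prop:1}, one would instead compute $\omega_\infty(W(f)\tau(t)[W(g)])$ by letting $\coth(\beta\varpi/2)\to 1$ in the formula derived inside the proof of \cref{absprop:7weakstargibbs}. The outcome is a $t$-independent coherent factor multiplied by $\exp(-\pi^2 \langle f, e^{it\varpi} g\rangle_2)$, whose Taylor expansion exhibits it as a superposition of oscillatory integrals of the form $\int \prod_j \bar f(k_j) e^{it\varpi(k_j)} g(k_j)\,\mathrm dk_j$ with $\varpi\ge\mu>0$. Thus the $t$-Fourier transform of $t\mapsto \omega_\infty(W(f)\tau(t)[W(g)])$ is supported in $[0,\infty)$, which is precisely the defining ground-state condition once one tests against $F\in\mathcal F^{-1}\mathscr{D}$ with $\operatorname{supp}\hat F\subset\mathbb{R}^-_*$. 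Extension from Weyl generators to arbitrary $a,b\in \mathbb{W}(\mathscr{D},\Im\langle\cdot,\cdot\rangle_2)$ proceeds by linearity and density, exactly as at the end of the proof of \cref{absprop:7weakstargibbs}.

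There is no genuine obstacle here, since the heavy lifting has been done previously; the only mild subtlety worth recording is that the non-square-integrable shift $-v/\varpi$ appears only as a $t$-independent phase and therefore plays no role in the ground state condition, which is why the construction extends from $v/\varpi \in L^2$ all the way to arbitrary $v\in\mathscr{D}'$.
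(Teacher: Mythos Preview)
Your proposal is correct and matches the paper's approach exactly: the paper states \cref{cor:1} without any proof, treating it as an immediate consequence of chaining \cref{absprop:7weakstargibbs} and \cref{prop:1}, which is precisely what you do in your first paragraph. The additional direct verification you sketch is a nice supplementary remark but goes beyond what the paper provides.
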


Whenever $v/\varpi\in L^2$, the algebraic ground state $\omega_{\infty}$ is Fock normal: the
representations are unitarily equivalent, with unitary map given by
identifying the cyclic vector $\Omega_{\omega_{\infty}}$ on the GNS representation
$(\mathscr{H}_{\omega_{\infty}},\pi_{\omega_{\infty}},\Omega_{\omega_{\infty}})$ with the (cyclic) Fock coherent
state
$C(-\frac{v}{\varpi})=e^{a(\frac{v}{\varpi})-a^{*}(\frac{v}{\varpi})}\Omega_{\omega_{\mathrm{F}}}$,
that is indeed the ground state of the (renormalized/shifted) Fock vHM
Hamiltonian:
\begin{equation*}
  H_{\omega_{\mathrm{F}}} = e^{a(\frac{v}{\varpi})-a^{*}(\frac{v}{\varpi})} \hat{H}_{\omega_{\mathrm{F}}} e^{a^{*}(\frac{v}{\varpi})-a(\frac{v}{\varpi})}=e^{a(\frac{v}{\varpi})-a^{*}(\frac{v}{\varpi})}\mathrm{d}\Gamma(\varpi)e^{a^{*}(\frac{v}{\varpi})-a(\frac{v}{\varpi})}\;;
\end{equation*}
where $\hat{H}_{\omega_{\mathrm{F}}}=\mathrm{d}\Gamma(\varpi)$ is often called the
\emph{dressed renormalized vHM Hamiltonian}.

If, however, $v/\varpi\notin L^2$, then the algebraic ground state \emph{is
  disjoint from the Fock vacuum}. As a matter of fact, the GNS representation
of $\omega_{\infty}$ is a Fock representation in which the field and momentum operators
are a shift of Fock ones (the shift makes the two representations
\emph{inequivalent}) \citep[see][for a detailed
construction]{arai2020mps}. The following result can be proved rephrasing
\citep[][\textsection8.10 and \textsection10.9]{arai2020mps}.

\begin{proposition}
  \label[proposition]{prop:2}
  For any $v\in \mathscr{D}'$, the GNS representation
  $(\mathscr{H}_{\omega_{\infty}},\pi_{\omega_{\infty}},\Omega_{\omega_{\infty}})$ of the algebraic vHM ground
  state $\omega_{\infty}$ satisfies:
  \begin{itemize}
  \item $\mathscr{H}_{\omega_{\infty}}= \bigoplus_{n\in \mathbb{N}_0} L^2\bigl((\mathbb{R}^d)^n\bigr)_{\mathrm{s}}$
    is the symmetric Fock space over $L^2(\mathbb{R}^d)$\footnote{Here
      $L^2\bigl((\mathbb{R}^d)^0\bigr)_{\mathrm{s}}:= \mathbb{C}$, and for any $n>0$,
      $L^2\bigl((\mathbb{R}^d)^n\bigr)_{\mathrm{s}}$ denotes the space of square
      integrable functions that are symmetric under the permutation of
      $d$-dimensional sets of variables.}.
  \item $\Omega_{\omega_{\infty}}=(1,0,0,\dotsc,)$.
  \item For any $a\in \mathbb{W}(\mathscr{D},\langle \cdot , \cdot \rangle_2)$ and $t\in \mathbb{R}$,
    $\pi_{\omega_{\infty}}(\tau_t(a))= e^{it \mathrm{d}\Gamma(\varpi)}\pi_{\omega_{\infty}}(a)e^{-it
      \mathrm{d}\Gamma(\varpi)}$, where $\mathrm{d}\Gamma(\varpi)$ is the second quantization of
    $\varpi$.
  \end{itemize}
\end{proposition}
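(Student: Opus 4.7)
The plan is to exhibit an explicit candidate GNS representation on the Fock space $\mathscr{H}_F = \bigoplus_{n\in\mathbb{N}_0}L^2((\mathbb{R}^d)^n)_{\mathrm{s}}$, with vacuum $\Omega_F=(1,0,0,\dotsc)$ and standard Fock Weyl operators $W_F(f)$, $f\in L^2(\mathbb{R}^d)$, satisfying $\langle \Omega_F, W_F(f)\Omega_F\rangle = e^{-\pi^2 \langle f,f\rangle_2/2}$ and $e^{it\mathrm{d}\Gamma(\varpi)}W_F(f)e^{-it\mathrm{d}\Gamma(\varpi)} = W_F(e^{it\varpi}f)$. Concretely, for $f\in \mathscr{D}$ I would set
\begin{equation*}
  \pi(W(f)) := e^{2\pi i \mathrm{Re}\langle f, -v/\varpi\rangle_2}\, W_F(f),
\end{equation*}
where the scalar phase is well defined because $v/\varpi\in\mathscr{D}'$ and $f\in\mathscr{D}$, even when $v/\varpi\notin L^2$. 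I would then extend $\pi$ to a $*$-representation of $\mathbb{W}(\mathscr{D},\Im\langle\cdot,\cdot\rangle_2)$ by verifying on generators the three defining Weyl relations: (i) is immediate, (ii) uses the $\mathbb{R}$-linearity of $\mathrm{Re}\langle\cdot,-v/\varpi\rangle_2$, and (iii) combines that same $\mathbb{R}$-linearity with the Fock Weyl relation for $W_F$, so that the extra phase factor $e^{2\pi i \mathrm{Re}\langle f+g,-v/\varpi\rangle_2}$ reassembles correctly.

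Next I would check that $\Omega_F$ is cyclic for $\pi$: the family $\{\pi(W(f))\Omega_F : f\in\mathscr{D}\}$ differs from $\{W_F(f)\Omega_F : f\in\mathscr{D}\}$ only by nonzero scalar factors, and the latter set is total in $\mathscr{H}_F$ (as $\mathscr{D}$ is dense in $L^2$ and Fock coherent vectors are total). A direct computation then yields $\langle \Omega_F, \pi(W(f))\Omega_F\rangle = e^{2\pi i \mathrm{Re}\langle f,-v/\varpi\rangle_2} e^{-\pi^2\langle f,f\rangle_2/2} = \hat{\omega}_{\infty}(f)$. By uniqueness up to unitary equivalence of the GNS triple, $(\mathscr{H}_F,\pi,\Omega_F)$ may therefore be identified with $(\mathscr{H}_{\omega_{\infty}},\pi_{\omega_{\infty}},\Omega_{\omega_{\infty}})$, which settles the first two bullet points.

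For the last bullet, I would compute directly, using the defining formula for $\tau(t)$,
\begin{align*}
  \pi(\tau(t)[W(f)]) &= e^{2\pi i \mathrm{Re}\langle f, (e^{-it\varpi}-1)v/\varpi\rangle_2}\,\pi(W(e^{it\varpi}f))\\
  &= e^{2\pi i \mathrm{Re}\langle f,(e^{-it\varpi}-1)v/\varpi\rangle_2 + 2\pi i\mathrm{Re}\langle e^{it\varpi}f,-v/\varpi\rangle_2}\, W_F(e^{it\varpi}f),
\end{align*}
and observe that, using $\mathrm{Re}\langle e^{it\varpi}f, v/\varpi\rangle_2 = \mathrm{Re}\langle f, e^{-it\varpi}v/\varpi\rangle_2$ (by self-adjointness of $\varpi$ against the $\mathscr{D}$--$\mathscr{D}'$ pairing), the two phases collapse to $e^{2\pi i \mathrm{Re}\langle f, -v/\varpi\rangle_2}$. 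Since $e^{\pm it\mathrm{d}\Gamma(\varpi)}$ commutes with this scalar factor, this gives $\pi(\tau(t)[W(f)]) = e^{it\mathrm{d}\Gamma(\varpi)}\pi(W(f))e^{-it\mathrm{d}\Gamma(\varpi)}$ on generators; the full intertwining follows by linearity and norm density.

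The main step is the algebraic check that $\pi$ is multiplicative on generators; once this is in place, the remaining claims reduce to bookkeeping of phase cancellations. I expect no serious technical obstacle, since the distributional object $v/\varpi$ enters only through scalar phase factors -- which is precisely the feature that allows the construction to survive in the regime $v/\varpi\notin L^2$ where the standard Fock-space dressing breaks down.
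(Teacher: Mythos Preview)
Your argument is correct: defining the twisted Fock representation $\pi(W(f))=e^{2\pi i\,\mathrm{Re}\langle f,-v/\varpi\rangle_2}W_F(f)$, verifying the Weyl relations, cyclicity, and the vacuum expectation, and then invoking GNS uniqueness is exactly the standard route, and your phase bookkeeping for the covariance under $\tau(t)$ is accurate. The paper does not supply its own proof of this proposition but defers to \cite[\S8.10 and \S10.9]{arai2020mps}, where essentially the same construction (a Fock representation with field operators shifted by the distribution $-v/\varpi$) is carried out; your write-up is a clean self-contained version of that argument.
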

\begin{corollary}
  \label[corollary]{cor:2}
  For any $v\in \mathscr{D}'$, the dressed renormalized vHM Hamiltonian is
  defined, in the $\omega_{\infty}$-GNS representation, as
  \begin{equation*}
    \hat{H}_{\omega_{\infty}}= \mathrm{d}\Gamma(\varpi)\;.
  \end{equation*}
  The undressed van Hove Hamiltonian can be defined if and only if $v/\varpi\in
  L^2$, or equivalently whenever $\omega_{\infty}$ is normal with respect to the Fock
  vacuum $\omega_{\mathrm{F}}$.
\end{corollary}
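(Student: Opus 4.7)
The plan is to deduce both assertions of the corollary directly from \cref{prop:2}, together with standard coherent state calculus on the Weyl CCR-algebra.

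First, I would establish the identity $\hat{H}_{\omega_{\infty}} = \mathrm{d}\Gamma(\varpi)$ as a one-line consequence of item~(iii) of \cref{prop:2}: it asserts that in the $\omega_{\infty}$-GNS representation the vHM dynamics $\tau$ is implemented by the strongly continuous one-parameter unitary group $t\mapsto e^{it\mathrm{d}\Gamma(\varpi)}$, acting on the standard Fock space provided by item~(i) and fixing the cyclic vector by item~(ii). Stone's theorem then pins down the GNS Hamiltonian as the self-adjoint generator $\mathrm{d}\Gamma(\varpi)$ itself, with no further work required.

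Next, for the equivalence between $v/\varpi \in L^2$ and normality of $\omega_{\infty}$ with respect to $\omega_{\mathrm{F}}$, I would invoke standard coherent state analysis. Since $\omega_{\infty}$ is by construction the algebraic coherent state centered at $-v/\varpi \in \mathscr{D}'$, when $v/\varpi \in L^2$ the Fock vector $C(-v/\varpi) = e^{a(v/\varpi) - a^{*}(v/\varpi)}\Omega_{\omega_{\mathrm{F}}}$ is a unit vector whose Weyl expectations reproduce exactly $\hat\omega_{\infty}$, yielding normality. For the converse, I would rely on the disjointness indicated in the paragraph preceding \cref{prop:2} and the references cited there; concretely, one computes $\lvert\langle \Omega_{\omega_{\mathrm{F}}}, C(\alpha)\rangle\rvert^2 = e^{-\|\alpha\|_2^2}$ on regularised approximants $\alpha_n$ of $-v/\varpi$ and observes that this vanishes in the limit when $v/\varpi \notin L^2$, so the algebraic coherent state cannot admit a vector implementer in Fock space.

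Finally, to tie this in with the existence of the undressed Hamiltonian, I would use the dressing operator $U := e^{a(v/\varpi) - a^{*}(v/\varpi)}$. In the regular case $v/\varpi \in L^2$ this is a well-defined unitary on the Fock representation which intertwines it with $\pi_{\omega_{\infty}}$, and the undressed Hamiltonian is simply $H_{\omega_{\infty}} := U^{*} \hat{H}_{\omega_{\infty}} U$, reproducing the usual self-energy renormalized Fock expression. In the singular case, $U$ no longer exists as a unitary, the two representations are disjoint, and no self-adjoint operator on $\mathscr{H}_{\omega_{\infty}}$ qualifies as an undressed vHM Hamiltonian. The main subtlety I foresee is articulating cleanly what "the undressed Hamiltonian can be defined" means in the singular regime; I would interpret it as the existence of a unitary dressing intertwiner between $\pi_{\omega_{\infty}}$ and the Fock representation, which by the second paragraph reduces the existence question precisely to the normality of $\omega_{\infty}$ with respect to $\omega_{\mathrm{F}}$.
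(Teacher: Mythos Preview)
Your proposal is correct and matches the paper's intent: the paper gives no explicit proof of this corollary, treating it as an immediate consequence of \cref{prop:2} (itself attributed to \cite{arai2020mps}), and your argument simply spells out the implicit steps --- Stone's theorem for the first assertion, and the coherent-state normality/disjointness discussion already sketched in the paragraph preceding \cref{prop:2} for the second.
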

This corroborates the fact that the vHM model is fundamentally
\emph{trivial}, whatever is its source $v$ (as long as it is a distribution
in $\mathscr{D}'$).

\section{A Hamiltonian construction of the van Hove--Miyatake model}
\label{sec:hamilt-constr-van}

We now move to a Hamiltonian approach to the vHM model, which we will then
prove to yield an equivalent result to \cref{prop:2,cor:2}.
It is based on a dressing transformation approach going back to Glimm
\cite{glimm1967cmpyuka1,glimm1968cmp}, and Ginibre and Velo
\cite{ginibre1970cmpren} -- from now abbreviated GGV (Glimm--Ginibre--Velo)
dressing.

Given the usual Fock space $\mathcal F = \bigoplus_{n\in \mathbb{N}_0}
L^2\bigl((\mathbb{R}^d)^n\bigr)_{\mathrm{s}}$, we define the vHM Hamiltonian as the
self-adjoint Fock space operator
\begin{align*}
  H_{\mathrm{vHM}} = \mathrm{d}\Gamma(\varpi) + a(v) + a^*(v)\;,
\end{align*}
for any source $v\in L^2(\mathbb R^d)$. This is identical to the definition
$H_{\omega_{\mathrm{F}}}$ above, but we choose slightly different notation here,
to emphasize the \emph{a priori} distinct algebraic and Hamiltonian
approaches.

Similar to the usual unitary Weyl dressing transformation discussed above,
the main idea behind the GGV dressing is to apply the -- in this case --
non-unitary $e^{a^*(-v/\varpi)}$ and rescale the wave functions with the vacuum
contribution $\|e^{a^*(-v/\varpi)}\Omega_{\mathcal F}\|^2_{\mathcal F}$, where
$\Omega_{\mathcal F}=(1,0,\ldots)$ is the usual Fock vacuum, analogously to
$\Omega_{\omega_{\mathrm{F}}}$ above. Using the canonical commutation relations (CCR)
and the Baker--Campbell--Haussdorff (BCH) formula, we can immediately prove
\begin{align*}
  \|e^{a^*(-v/\varpi)}\Omega_{\mathcal F}\|_{\mathcal F} = e^{\frac12\|v/\varpi\|_2^2}\; .
\end{align*}
At the heart of our GGV dressing then lie the following observations, which
hold on the finite particle subspace defined -- for any subspace $V\subset
L^2(\mathbb R^d)$ -- by
\[
  \mathcal F_{\mathrm{fin}} ( V ) = \operatorname{span} \big\{ f_1\otimes\cdots \otimes f_n
  \;\big| \; n\in\mathbb N \text{ and } f_1,\ldots,f_n\in V \big\}\; .
\]

\begin{proposition}
  \label[proposition]{prop:4}
  Let $\psi,\phi\in\mathcal F_{\mathrm{fin}}(\mathcal D(\varpi))$, the finite particle
  subspace over the domain of $\varpi$ seen as a multiplication operator on
  $L^2(\mathbb R^d)$, and assume $v \in L^2(\mathbb R^d)$.  Then
  \begin{align}
    \label{eq:dressedscalar}
    \frac{\langle e^{a^*(-v/\varpi)}\phi,e^{a^*(-v/\varpi)}\psi\rangle}{\|e^{a^*(-v/\varpi)}\Omega_{\mathcal F}\|_{\mathcal F}^2} 
    = \langle e^{a(-v/\varpi)}\phi, e^{a(-v/\varpi)}\psi \rangle
  \end{align}
  and
  \begin{align}
    \label{eq:dressedHamiltonian}
    \frac{\langle e^{a^*(-v/\varpi)}\phi,(H_{\mathrm{vHM}}+\|\varpi^{-1/2}v\|_2^2)e^{a^*(-v/\varpi)}\psi\rangle}{\|e^{a^*(-v/\varpi)}\Omega_{\mathcal{F}}\|_{\mathcal F}^2}
    = \langle e^{a(-v/\varpi)}\phi, \hat H_{\mathrm{vHM}} e^{a(-v/\varpi)}\psi \rangle\;,
  \end{align}
  where $\hat{H}_{\mathrm{vHM}}= \mathrm{d}\Gamma(\varpi)$ is the dressed renormalized
  van Hove Hamiltonian.
\end{proposition}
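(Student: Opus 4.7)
The plan is to prove both identities on the finite particle subspace $\mathcal F_{\mathrm{fin}}(\mathcal D(\varpi))$ by purely algebraic CCR manipulations. Setting $f := -v/\varpi$, which lies in $\mathcal D(\varpi) \subset L^2(\mathbb R^d)$ since $v\in L^2$ and $\varpi\ge\mu>0$, I first observe that $e^{a(f)}\psi$ terminates as a finite sum (because $a(f)$ lowers the particle number and $\psi$ has bounded particle number), while $e^{a^*(f)}\psi = \sum_{k\ge 0}\tfrac{1}{k!}a^*(f)^k\psi$ converges in norm with image in $\mathcal D(\mathrm d\Gamma(\varpi))$, and $(e^{a^*(f)})^{\ast} = e^{a(f)}$ on this domain. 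Since $[a(f),a^*(f)] = \|f\|_2^2$ is a scalar, the Baker--Campbell--Hausdorff identity
\[
  e^{a(f)} e^{a^*(f)} = e^{\|f\|_2^2}\, e^{a^*(f)} e^{a(f)}
\]
holds as a rigorous operator identity on $\mathcal F_{\mathrm{fin}}(\mathcal D(\varpi))$. The first identity \eqref{eq:dressedscalar} is then immediate: compute
\[
  \langle e^{a^*(f)}\phi, e^{a^*(f)}\psi\rangle = \langle\phi, e^{a(f)} e^{a^*(f)}\psi\rangle = e^{\|f\|_2^2}\langle e^{a(f)}\phi, e^{a(f)}\psi\rangle,
\]
and divide by $\|e^{a^*(f)}\Omega_{\mathcal F}\|_{\mathcal F}^2 = e^{\|f\|_2^2}$.

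For \eqref{eq:dressedHamiltonian} the plan is to run two successive Hadamard (conjugation) expansions, each terminating after at most two commutators. Using $[\mathrm d\Gamma(\varpi), a^*(f)] = a^*(\varpi f) = -a^*(v)$, $[a^*(v), a^*(f)] = 0$ and $[a(v), a^*(f)] = \langle v, f\rangle_2 = -\|\varpi^{-1/2}v\|_2^2$, the first conjugation gives
\[
  e^{-a^*(f)} \bigl(H_{\mathrm{vHM}} + \|\varpi^{-1/2}v\|_2^2\bigr) e^{a^*(f)} = \mathrm d\Gamma(\varpi) + a(v).
\]
Symmetrically, $[a(f), \mathrm d\Gamma(\varpi)] = a(\varpi f) = -a(v)$ together with $[a(f), a(v)] = 0$ yields
\[
  e^{a(f)} \bigl(\mathrm d\Gamma(\varpi) + a(v)\bigr) e^{-a(f)} = \mathrm d\Gamma(\varpi).
\]
Chaining both conjugations together with the BCH product rule above,
\begin{align*}
  \langle e^{a^*(f)}\phi, (H_{\mathrm{vHM}} + \|\varpi^{-1/2}v\|_2^2) e^{a^*(f)}\psi\rangle
  &= \langle\phi, e^{a(f)} e^{a^*(f)} (\mathrm d\Gamma(\varpi) + a(v))\psi\rangle \\
  &= e^{\|f\|_2^2}\langle\phi, e^{a^*(f)} e^{a(f)} (\mathrm d\Gamma(\varpi) + a(v))\psi\rangle \\
  &= e^{\|f\|_2^2}\langle e^{a(f)}\phi, \mathrm d\Gamma(\varpi) e^{a(f)}\psi\rangle,
\end{align*}
and division by $e^{\|f\|_2^2}$ closes the proof.

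The only real obstacle is justifying all of the above as bona fide operator identities rather than formal power series, but this is essentially automatic here: because $a(f)$ and $a^*(f)$ preserve $\mathcal F_{\mathrm{fin}}(\mathcal D(\varpi))$ (using $f\in\mathcal D(\varpi)$), and the relevant iterated commutators vanish after one or two steps, every Hadamard series truncates and every intermediate vector lies in the domain of every operator that acts on it. The crucial algebraic input is the precise cancellation $\varpi f = -v$ together with $\langle v,f\rangle_2 = -\|\varpi^{-1/2}v\|_2^2$, which is what forces the self-energy counterterm to appear with exactly the right coefficient; no spectral or self-adjointness machinery is needed beyond the elementary CCR.
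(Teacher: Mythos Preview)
Your proof is correct and follows essentially the same route as the paper: both derive \eqref{eq:dressedscalar} directly from the BCH identity $e^{a(f)}e^{a^*(f)}=e^{\|f\|_2^2}e^{a^*(f)}e^{a(f)}$, and both reduce \eqref{eq:dressedHamiltonian} to computing commutators of $\mathrm d\Gamma(\varpi)$ and $a(v)$ with the exponentials, which terminate after one step thanks to the CCR and the relation $\varpi f=-v$. The only cosmetic difference is that you package the computation as two successive Hadamard conjugations $e^{-a^*(f)}(\,\cdot\,)e^{a^*(f)}$ and $e^{a(f)}(\,\cdot\,)e^{-a(f)}$, whereas the paper writes the equivalent commutator identities $[\mathrm d\Gamma(\varpi),e^{a^*(f)}]=e^{a^*(f)}a^*(\varpi f)$ and $[a(h),e^{a^*(f)}]=e^{a^*(f)}\langle h,f\rangle_2$ via the integral formula $[A,e^B]=\int_0^1 e^{(1-s)B}[A,B]e^{sB}\,\mathrm ds$.
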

\begin{proof}
  The first statement is again immediate, by combining CCR and BCH formula.
	
  To verify the second statement, we observe the commutator identity
  \begin{align*}
    [A,e^B] = \int_0^1 e^{(1-s)B}[A,B]e^{sB}\mathrm{d}s\;,
  \end{align*}
  which holds whenever these expressions are jointly defined, as can easily
  be seen by differentiating.  Combining with the CCR, we find the operator
  identities
  \begin{align*}
    &[\mathrm{d}\Gamma(\varpi),e^{a^*(f)}] = e^{a^*(f)}a^*(\varpi f)\qquad  \text{on $\mathcal F_{\mathrm{fin}}(\mathcal D(\varpi))$},\qquad f\in \mathcal D(\varpi)\;,\\
    &[a(f),e^{a^*(g)}] =  e^{a^*(g)} \langle f,g\rangle_2\;,
    \qquad \text{on $\mathcal F_{\mathrm{fin}}(L^2(\mathbb R^d))$},\qquad f,g\in L^2(\mathbb R^d)\;,
  \end{align*}
  which again using the above argument involving CCR and BCH formula yield
  the statement.
\end{proof}
We take the right hand side of \cref{eq:dressedscalar,eq:dressedHamiltonian}
as the definition of the {\em dressed scalar product} and {\em dressed vHM
  Hamiltonian}, respectively. We will, for the remainder of this section,
argue that this construction is valid for the arbitrary distributional
sources covered in the previous section, and furthermore it comes with a
natural embedding into the Fock space, reproducing exactly the abstract GNS
representation of the above algebraic ground states.

For notational brevity, we will throughout this section adopt the notation
$g:=v/\varpi\in\mathscr D'$.  We define the corresponding annihilation operator on
$\mathcal F_{\mathrm{fin}}(\mathscr D)$ by
\begin{align*}
  a(g)(f_1\otimes_{\mathrm s} f_2\otimes_{\mathrm{s}}\cdots \otimes_{\mathrm{s}} f_n)
  = \frac{1}{\sqrt n}\sum_{\ell=1}^{n} \overline{\langle f_\ell,g \rangle_2} f_1\otimes_{\mathrm{s}}\cdots \hat f_j\cdots \otimes_{\mathrm{s}} f_n,
\end{align*}
where the $\hat{\,\cdot\,}$ denotes omission of the $j$-th factor, and extension by
linearity.  Note that $a(g)$ leaves $\mathcal F_{\mathrm{fin}}(\mathscr D)$
invariant and that $a(g)^n\psi=0$ for $\psi\in\mathcal{F}_{\mathrm{fin}}(\mathscr D)$ and $n$
large enough, whence we can define $e^{a(g)}$ as an operator on $\mathcal
F_{\mathrm{fin}}(\mathscr{D})$, by (truncated) series expansion.  Furthermore,
for $f\in\mathscr D$, we have the usual CCR
\begin{align*} [a(g),a^*(f)] = \overline{\langle f,g\rangle_2}.
\end{align*}
To verify that the right hand side of \cref{eq:dressedscalar} defines a
scalar product, the only difficulty is to check the positive definiteness.
\begin{lemma}
  \label[lemma]{lem:1}
  For any $g\in\mathscr{D}'$, the operator $e^{a(g)}$ is injective on $\mathcal
  F_{\mathrm{fin}}(\mathscr D)$.
\end{lemma}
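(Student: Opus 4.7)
The plan is to exploit that $a(g)$ lowers the particle number by exactly one, so on any fixed vector $\psi\in\mathcal F_{\mathrm{fin}}(\mathscr{D})$ it acts nilpotently. This makes $e^{a(g)}$ unitriangular with respect to the $\mathbb{N}_0$-grading by particle number (the identity on the diagonal, strict particle-lowering off-diagonal), from which injectivity follows immediately.

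Concretely, I would decompose any nonzero $\psi\in\mathcal F_{\mathrm{fin}}(\mathscr{D})$ according to particle number, writing $\psi=\sum_{n=0}^{N}\psi_n$ with $\psi_n$ its $n$-particle component and $N$ chosen to be the largest index with $\psi_N\neq 0$. Since $a(g)^k$ sends the $(n+k)$-particle sector into the $n$-particle sector, the $n$-particle component of $e^{a(g)}\psi$ is the \emph{finite} sum
$$
(e^{a(g)}\psi)_n=\sum_{k=0}^{N-n}\frac{1}{k!}\,a(g)^{k}\psi_{n+k}.
$$
Reading this identity at $n=N$ keeps only the $k=0$ term, so $(e^{a(g)}\psi)_N=\psi_N$. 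If $e^{a(g)}\psi=0$ then $\psi_N=0$, contradicting the choice of $N$, and we conclude $\psi=0$.

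The argument faces essentially no serious obstacle: working on the finite-particle subspace tames all the potentially singular behaviour introduced by $g\in\mathscr{D}'$, because $a(g)^k\psi=0$ as soon as $k$ exceeds the top particle number of $\psi$. In particular, both the definition of $a(g)$ on $\mathcal F_{\mathrm{fin}}(\mathscr{D})$ via the distributional pairing $\langle f_\ell,g\rangle_2$ and the termination of the exponential series are automatic. As an immediate consequence, $\langle e^{a(g)}\phi,e^{a(g)}\psi\rangle$ defines a genuine (positive definite) Hermitian form on $\mathcal F_{\mathrm{fin}}(\mathscr{D})$, as required to make sense of the dressed scalar product appearing on the right-hand side of \cref{eq:dressedscalar}.
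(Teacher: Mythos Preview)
Your argument is correct and is essentially the same as the paper's: both exploit that $a(g)$ strictly lowers particle number, so the top particle-number component of $e^{a(g)}\psi$ coincides with that of $\psi$, forcing $\psi=0$ whenever $e^{a(g)}\psi=0$. The only cosmetic difference is that you phrase it as a contradiction at the top level $N$, whereas the paper iterates downward from $n_0$ to $0$.
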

\begin{proof}
  Assume $\psi = (\psi^{(n)})_{n\in\mathbb N_0}\in \ker e^{a(g)}$ and let $n_0\in\mathbb
  N$ such that $\psi^{(n)}=0$ for $n>n_0$. Then
  \[
    (e^{a(g)}\psi)^{(n_0)} = \psi^{(n_0)} = 0
  \]
  and thus $\psi^{(n_0)}=0$, i.e., we can replace $n_0$ by $n_0-1$. Iterating
  this argument yields $\psi = 0$ and thus injectivity of $e^{a(f)}$ ensues.
\end{proof}
This now allows us to define the scalar product
\begin{align*}
  \langle \psi, \phi \rangle_g = \langle e^{a(g)}\psi,e^{a(g)}\phi\rangle_{\mathcal F},
  \qquad \psi,\phi\in\mathcal F_{\mathrm{fin}}(\mathscr D),
\end{align*}
where the scalar product on the right is the usual Fock space scalar product.
Since this obviously equals the case $g=0$ on the left hand side, we will use
$\langle \cdot\,,\,\cdot \rangle_0$ and $\langle \cdot\,,\,\cdot \rangle_{\mathcal F}$ interchangeably from now on. We
can now thus define the dressed Hilbert space by completion of the above
inner product space.
\begin{definition}[Dressed Hilbert space]
  \label[definition]{def:1}
  Given $g\in\mathscr D'$, let $\mathcal F_g$ denote the Hilbert space
  completion of $(\mathcal F_{\mathrm{fin}}(\mathscr D),\langle \,\cdot\, , \, \cdot\, \rangle_g)$.
\end{definition}
Now, we want to use the right hand side of \cref{eq:dressedHamiltonian} to
define the dressed vHM Hamiltonian.  To this end, again for $g\in\mathscr D'$,
we define the quadratic form
\begin{align*}
  \mathfrak q_g(\phi,\psi) = \langle e^{a(g)}\phi,\mathrm{d}\Gamma(\varpi)e^{a(g)}\psi\rangle_{\mathcal F},
  \quad
  \phi,\psi\in\mathcal D(\mathfrak q_g) = \mathcal F_{\mathrm{fin}}(\mathscr D).
\end{align*}
We remark that this quadratic form is well-defined, because $\mathscr D\subset
\mathcal D(\varpi)$ and thus $\mathcal F_{\mathrm{fin}}(\mathscr D)\subset \mathcal
D(\mathrm{d}\Gamma(\varpi))$. The form is symmetric and lower-bounded, since $\varpi$ was
assumed to be strictly positive, and densely defined in $\mathcal F_g$ by
construction.

For our definition of the dressed vHM Hamiltonian, the following observation
is crucial.
\begin{lemma}
  \label[lemma]{lemma:1}
  The quadratic form $\mathfrak q_g$ is closable on $\mathcal F_g(\mathscr
  D)$ for any $g\in\mathscr D'$.
\end{lemma}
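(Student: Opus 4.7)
The plan is to recognize $\mathfrak q_g$ as the pullback, via an isometric embedding, of the (closed) quadratic form of the self-adjoint operator $\mathrm d\Gamma(\varpi)$ on the usual Fock space $\mathcal F$; closability then follows from the general fact that the pullback of a closable form through an isometric map is closable.

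The first and key observation is that, by the very definition of the dressed inner product, $\iota\colon\phi\mapsto e^{a(g)}\phi$ is an isometry from $(\mathcal F_{\mathrm{fin}}(\mathscr D),\langle\,\cdot\,,\,\cdot\,\rangle_g)$ into $(\mathcal F,\langle\,\cdot\,,\,\cdot\,\rangle_{\mathcal F})$:
\begin{equation*}
\|e^{a(g)}\phi\|_{\mathcal F}^{2} = \langle e^{a(g)}\phi,e^{a(g)}\phi\rangle_{\mathcal F} = \langle\phi,\phi\rangle_g = \|\phi\|_g^{2},\qquad \phi\in\mathcal F_{\mathrm{fin}}(\mathscr D),
\end{equation*}
and by continuity $\iota$ extends to an isometric embedding $\iota\colon\mathcal F_g\to\mathcal F$.

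Next I would identify the image of $\iota$ as a subspace of the form domain of $\mathrm d\Gamma(\varpi)$. Since $a(g)$ leaves $\mathcal F_{\mathrm{fin}}(\mathscr D)$ invariant, one has $\iota(\mathcal F_{\mathrm{fin}}(\mathscr D))\subset\mathcal F_{\mathrm{fin}}(\mathscr D)$; moreover $\mathscr D\subset\mathcal D(\varpi)$ (because $\varpi\in L^\infty_{\mathrm{loc}}$ and test functions are compactly supported), so that $\mathcal F_{\mathrm{fin}}(\mathscr D)\subset\mathcal D(\mathrm d\Gamma(\varpi))$. Writing $q$ for the closed quadratic form of the strictly positive self-adjoint operator $\mathrm d\Gamma(\varpi)$ on $\mathcal F$, the defining identity of $\mathfrak q_g$ thus becomes the pullback relation
\begin{equation*}
\mathfrak q_g(\phi,\psi) = q(\iota\phi,\iota\psi),\qquad \phi,\psi\in\mathcal F_{\mathrm{fin}}(\mathscr D).
\end{equation*}

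Closability is then immediate. If $(\phi_n)\subset\mathcal F_{\mathrm{fin}}(\mathscr D)$ satisfies $\phi_n\to 0$ in $\mathcal F_g$ and $\mathfrak q_g(\phi_n-\phi_m,\phi_n-\phi_m)\to 0$, the isometry transports these to $\iota\phi_n\to 0$ in $\mathcal F$ and $q(\iota\phi_n-\iota\phi_m,\iota\phi_n-\iota\phi_m)\to 0$; closedness of $q$ (a consequence of self-adjointness of $\mathrm d\Gamma(\varpi)$) then forces $q(\iota\phi_n,\iota\phi_n)\to 0$, i.e.\ $\mathfrak q_g(\phi_n,\phi_n)\to 0$. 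The only step that is not purely formal is the verification that $\iota(\mathcal F_{\mathrm{fin}}(\mathscr D))\subset\mathcal D(\mathrm d\Gamma(\varpi))$; this is essentially the only conceptual obstacle, and it is cleanly resolved by the invariance of $\mathcal F_{\mathrm{fin}}(\mathscr D)$ under $e^{a(g)}$ together with the local boundedness of $\varpi$.
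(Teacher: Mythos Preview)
Your proof is correct and follows essentially the same route as the paper: both reduce closability of $\mathfrak q_g$ to closability of $\mathfrak q_0$ via the identities $\mathfrak q_g(\psi)=\mathfrak q_0(e^{a(g)}\psi)$ and $\|\psi\|_g=\|e^{a(g)}\psi\|_0$, with closability of $\mathfrak q_0$ coming from the self-adjointness of $\mathrm d\Gamma(\varpi)$. Your write-up is somewhat more explicit about the isometric-pullback structure and the domain inclusion $\iota(\mathcal F_{\mathrm{fin}}(\mathscr D))\subset\mathcal D(\mathrm d\Gamma(\varpi))$, but there is no substantive difference in approach.
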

\begin{proof}
  First, recall that closability of $\mathfrak q_g$ is equivalent to the fact
  that for any sequence $(\psi_n)\subset \mathcal F_{\mathrm{fin}}(\mathscr D)$ with
  $\|\psi_n\|_g\to 0$ and $\frak q_g(\psi_n-\psi_m)\to 0$ in the usual Cauchy sense, one has $\mathfrak
  q_g(\psi_n)\to 0$.  Now since $ \mathfrak q_g(\psi) = \mathfrak q_0(e^{a(g)}\psi) $
  and $\|\psi\|_g=\|e^{a(g)}\psi\|_0$ for any $\psi\in\mathcal F_{\mathrm{fin}}(\mathscr
  D)$, this follows from closability of $\mathfrak q_0$, which in turn is
  evident from the selfadjointness of $\mathrm{d}\Gamma(\varpi)$.
\end{proof}
\begin{definition}[Dressed vHM Hamiltonian]
  \label[definition]{def:2}
  Given $g\in\mathscr D'$, let $H_g$ denote the unique selfadjoint operator on
  $\mathcal F_g$ corresponding to the closure of $\mathfrak q_g$.
\end{definition}
\begin{remark}
  \label[remark]{rem:opdomain}
  The finite particle subspace $\mathcal F_{\mathrm{fin}}(\mathscr D)$
  belongs to the {\em operator} domain of $H_g$.  This follows from the
  Cauchy--Schwarz inequality by
  \[ \frac{\mathfrak q_g(\phi,\psi)}{\|\phi\|_g} = \frac{\langle
      e^{a(g)}\phi,\mathrm{d}\Gamma(\varpi)e^{a(g)}\psi\rangle}{\|e^{a(g)}\phi\|} \le
    \|\mathrm{d}\Gamma(\varpi)e^{a(g)}\psi\|, \quad \phi,\psi\in\mathcal F_{\mathrm{fin}}(\mathscr D) \]
  and the Riesz representation theorem.  We note that this argument was used
  to construct the renormalized operator in \cite{ginibre1970cmpren}, but it
  does not yield self-adjointness.  In the proof of \cref{prop:3}, we will
  show that $\mathcal F_{\mathrm{fin}}(\mathscr D) $ is in fact a core for
  $H_g$.
\end{remark}
We can also embed the dressed Hilbert space into Fock space, and identify
exponential vectors in $\mathcal F_g$, similar to the exponential (or
coherent) vectors $e^{a^*(f)}\Omega_{\mathcal F}$, $f\in L^2(\mathbb R^d)$ in the
usual Fock space $\mathcal F$.
\begin{proposition}
  \label[proposition]{prop:dressing}
  Let $g\in\mathscr D'$. Then the following properties hold:
  \begin{enumerate}[{\em (i)}]
  \item\label{item:6} There exists a unique bounded extension $\iota_g:\mathcal F_g\to \mathcal
    F_0$ of $e^{a(g)}$.
  \item\label{item:7} $\iota_g$ is unitary.
  \item\label{item:8} The series $\epsilon_g(f) = \sum_{n=0}^\infty \frac1{\sqrt{n!}}f^{\otimes
      n}$ is absolutely convergent in $\mathcal F_g$ for any $f\in
    \mathscr{D}$. Furthermore, $\iota_g \epsilon_{g}(f) = e^{\overline{\langle f,
        g\rangle_2}}\epsilon_{0}(f)$.\footnote{Our definition of $\epsilon_0(f)$ coincides with
      the usual definition of exponential vectors in the Fock space,
      cf. \cite{Arai.2018}.}
  \end{enumerate}
\end{proposition}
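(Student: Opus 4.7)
The plan is to prove the three assertions in order. Parts \emph{(i)} and \emph{(ii)} are structural consequences of the definition of $\langle\,\cdot\,,\,\cdot\,\rangle_g$ and of the triangular nature of $e^{a(g)}$ on finite particle vectors, while \emph{(iii)} requires an explicit computation combined with a double-series rearrangement.

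For \emph{(i)}, I would simply observe that by the very definition of $\langle\,\cdot\,,\,\cdot\,\rangle_g$, the map $e^{a(g)}:(\mathcal F_{\mathrm{fin}}(\mathscr D),\|\cdot\|_g)\to(\mathcal F_0,\|\cdot\|_{\mathcal F})$ is isometric; since $\mathcal F_{\mathrm{fin}}(\mathscr D)$ is dense in $\mathcal F_g$ by construction (\cref{def:1}), the unique bounded extension $\iota_g$ exists and is automatically isometric. For \emph{(ii)}, it then suffices to verify that the range of $\iota_g$ is dense in $\mathcal F_0$. The key point is that $a(g)$, by its definition, maps $\mathcal F_{\mathrm{fin}}(\mathscr D)$ into itself (the only use of $g\in\mathscr D'$ is as a duality pairing against test functions in $\mathscr D$) and strictly lowers the particle number; hence $e^{a(g)}$ preserves $\mathcal F_{\mathrm{fin}}(\mathscr D)$ and acts as the identity on the highest populated sector. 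Given any $\psi\in\mathcal F_{\mathrm{fin}}(\mathscr D)$ supported in particle sectors $\le N$, one can solve $e^{a(g)}\phi=\psi$ inductively downward via $\phi^{(N)}:=\psi^{(N)}$ and $\phi^{(k)}:=\psi^{(k)}-\sum_{\ell=1}^{N-k}\frac{1}{\ell!}a(g)^\ell\phi^{(k+\ell)}$ (the sum being finite). This shows $e^{a(g)}$ is a bijection of $\mathcal F_{\mathrm{fin}}(\mathscr D)$ onto itself; since the latter is dense in $\mathcal F_0$, the range of $\iota_g$ is dense, and an isometry with dense range is unitary.

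For \emph{(iii)}, the iteration $a(g)f^{\otimes n}=\sqrt n\,\overline{\langle f,g\rangle_2}f^{\otimes(n-1)}$ gives $a(g)^k f^{\otimes n}=\sqrt{n!/(n-k)!}\,\overline{\langle f,g\rangle_2}^k f^{\otimes(n-k)}$ for $k\le n$, whence
\[
 e^{a(g)}\frac{f^{\otimes n}}{\sqrt{n!}}=\sum_{k=0}^{n}\frac{\overline{\langle f,g\rangle_2}^k}{k!\sqrt{(n-k)!}}f^{\otimes(n-k)}\;.
\]
Using $\|f^{\otimes m}\|_{\mathcal F}=\|f\|_2^m$ and the triangle inequality, $\|f^{\otimes n}/\sqrt{n!}\|_g\le\sum_{k=0}^n\frac{|\langle f,g\rangle_2|^k}{k!}\frac{\|f\|_2^{n-k}}{\sqrt{(n-k)!}}$; summing over $n$ and reindexing via $m=n-k$ decouples the two sums and yields $\sum_n\|f^{\otimes n}/\sqrt{n!}\|_g\le e^{|\langle f,g\rangle_2|}\sum_{m\ge 0}\|f\|_2^m/\sqrt{m!}<\infty$, the last series converging by the ratio test. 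This establishes absolute convergence of $\epsilon_g(f)$ in $\mathcal F_g$. Applying the bounded operator $\iota_g$ termwise (justified by continuity and the established absolute convergence) and performing the identical Fubini rearrangement now with equalities produces $\iota_g\epsilon_g(f)=\bigl(\sum_{k\ge 0}\overline{\langle f,g\rangle_2}^k/k!\bigr)\epsilon_0(f)=e^{\overline{\langle f,g\rangle_2}}\epsilon_0(f)$.

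I expect the main technical point to be the bookkeeping of the double series in \emph{(iii)}: absolute convergence in $\mathcal F_g$ must be established first, so as to justify the termwise passage of $\iota_g$ and the Fubini interchange that produces the factorised exponential. Parts \emph{(i)} and \emph{(ii)} are essentially immediate once the isometry property and the downward triangular inversion of $e^{a(g)}$ on $\mathcal F_{\mathrm{fin}}(\mathscr D)$ are spelled out.
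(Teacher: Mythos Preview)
Your proof is correct. Parts \emph{(i)} and \emph{(iii)} follow essentially the paper's route: the isometry observation is identical, and for \emph{(iii)} the paper computes $\|f^{\otimes n}\|_g^2$ exactly and bounds it by $(|\langle f,g\rangle_2|^2+\|f\|_2^2)^n$, whereas you bound $\|f^{\otimes n}/\sqrt{n!}\|_g$ via the triangle inequality and then decouple the double sum; both lead to the same conclusion and the same termwise identification of $\iota_g\epsilon_g(f)$.

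The genuine difference is in \emph{(ii)}. The paper does \emph{not} argue surjectivity on $\mathcal F_{\mathrm{fin}}(\mathscr D)$; instead it postpones \emph{(ii)} until after \emph{(iii)} and then observes that $\iota_g\epsilon_g(f)=e^{\overline{\langle f,g\rangle_2}}\epsilon_0(f)$ places every exponential vector $\epsilon_0(f)$, $f\in\mathscr D$, in the range of $\iota_g$, invoking the external fact that such exponential vectors are total in $\mathcal F$. Your triangular inversion argument --- solving $e^{a(g)}\phi=\psi$ recursively from the top sector down --- is more self-contained (no appeal to totality of coherent vectors) and shows directly that $e^{a(g)}$ is a bijection of $\mathcal F_{\mathrm{fin}}(\mathscr D)$ onto itself; this is in fact the natural companion to the injectivity argument of \cref{lem:1}. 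The paper's approach is a bit more economical since \emph{(iii)} is needed anyway, but yours avoids importing a density result and makes the logical structure $\text{\emph{(i)}}\Rightarrow\text{\emph{(ii)}}\Rightarrow\text{\emph{(iii)}}$ linear rather than circular.
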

\begin{proof}
  For $\psi\in\mathcal F_{\mathrm{fin}}(\mathscr{D})$, we have by definition
  $\|e^{a(g)}\psi\|_{\mathcal F}^2 = \|\psi\|_{\mathcal F_g}^2$.  This proves the
  existence and uniqueness of $\iota_g$ as well as that it is an isometric
  isometry.  To prove unitarity, it thus remain to prove that $\iota_g$ has dense
  range.  This follows from \eqref{item:8}, and the fact that the exponential
  vectors $\{\epsilon_{0}(f)\;:\; f\in \mathscr{D}\}$ span a dense subspace of $\mathcal
  F$, see for example \cite[Thm.~5.37]{Arai.2018}.
	
  It thus remains to prove \eqref{item:8}.  To this end, fix some $f\in\mathscr
  D$ and observe that by definition
  \begin{align*}
    \|f^{\otimes n}\|_{g}^2
    = \|e^{a(g)}f^{\otimes n}\|_{\mathcal F}^2
    = \sum_{\ell=0}^{n}\frac{n!}{(n-\ell)!(\ell!)^2}|\langle f,g\rangle_2|^{2\ell} \|f\|_2^{2(n-\ell)}
    \le \big(|\langle f,g\rangle|_2^2 + \|f\|^2\big)^n
    .
  \end{align*}
  This proves that the series $ \sum_{n} \frac1{{n!}}\|f^{\otimes n}\|_g^2$ is Cauchy
  and thus the claimed convergence.  By the continuity of $\iota_g$, we further
  find
  \begin{align*}
    \iota_g \epsilon_g(f)
    &= \sum_{n=0}^\infty \frac1{\sqrt{n!}}\iota_gf^{\otimes n}
    = \sum_{n=0}^\infty \frac{1}{\sqrt{n!}}e^{a(g)}f^{\otimes m} 
    = \sum_{n,m=0}^\infty\frac{1}{m!\sqrt{n!}}a(g)^mf^{\otimes m}
    \\&
    = \sum_{n,m=0}^{\infty}\frac{1}{m!\sqrt{(n-m)!}}\big(\overline{\langle f,g\rangle_2}\big)^m f^{\otimes(n-m)}
    = e^{\overline{\langle f, g\rangle_2}}\epsilon_{0}(f), \end{align*}
  where we once more used the definition of $a(g)$, as well as the absolute convergence of the double sum in the last step.
\end{proof}
Let us conclude by relating $H_g$ to the GNS representation of the algebraic
ground states.
\begin{proposition}
  \label[proposition]{prop:3}
  For $g\in\mathscr D'$, we have $ \iota_{g} H_g \iota_g^* = \mathrm{d}\Gamma(\varpi)$.
\end{proposition}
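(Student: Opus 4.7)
The strategy is to identify a common core on which $H_g$ and $A := \iota_g^* \, \mathrm{d}\Gamma(\varpi) \, \iota_g$ act identically, and then invoke maximality of self-adjoint operators to conclude. The natural candidate is $\mathcal F_{\mathrm{fin}}(\mathscr D)$, which by \cref{rem:opdomain} already lies in the operator domain of $H_g$.

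First I would check that $\iota_g$ maps $\mathcal F_{\mathrm{fin}}(\mathscr D)$ bijectively onto itself. Since $a(g)$ lowers particle number by one and preserves the property that each tensor factor lies in $\mathscr D$, it restricts to a nilpotent operator on each fixed-particle sector of $\mathcal F_{\mathrm{fin}}(\mathscr D)$. Consequently $e^{\pm a(g)}$ are well-defined as finite sums on $\mathcal F_{\mathrm{fin}}(\mathscr D)$, mutually inverse, and bijective, so $\iota_g\mathcal F_{\mathrm{fin}}(\mathscr D)=\mathcal F_{\mathrm{fin}}(\mathscr D)$. Combined with the standard fact that $\mathcal F_{\mathrm{fin}}(\mathscr D)$ is a core for $\mathrm{d}\Gamma(\varpi)$ (which in turn follows from $\mathscr D$ being a core for the multiplication operator $\varpi$ on $L^2(\mathbb R^d)$, by local boundedness and a mollification argument), this tells us that the unitary conjugate $A$ is self-adjoint on $\mathcal F_g$ with $\mathcal F_{\mathrm{fin}}(\mathscr D)$ as a core.

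Next, for any $\phi,\psi\in\mathcal F_{\mathrm{fin}}(\mathscr D)$, the definition of $\mathfrak q_g$ together with unitarity of $\iota_g$ gives
\[
\langle \phi, A\psi\rangle_g
= \langle \iota_g\phi,\mathrm{d}\Gamma(\varpi)\iota_g\psi\rangle_{\mathcal F}
= \mathfrak q_g(\phi,\psi)
= \langle \phi, H_g\psi\rangle_g,
\]
where in the last step I use \cref{rem:opdomain} to realize the form $\mathfrak q_g(\phi,\cdot)$ as the vector $H_g\psi$. Since $\mathcal F_{\mathrm{fin}}(\mathscr D)$ is dense in $\mathcal F_g$, this implies $A\psi = H_g\psi$ for every $\psi\in\mathcal F_{\mathrm{fin}}(\mathscr D)$, i.e., $A\subseteq H_g$ as operators. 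As $A$ is self-adjoint and $H_g$ is self-adjoint, this inclusion forces equality $A=H_g$, which unravels to the claim $\iota_g H_g \iota_g^* = \mathrm{d}\Gamma(\varpi)$; as a byproduct, $\mathcal F_{\mathrm{fin}}(\mathscr D)$ is exhibited as a core for $H_g$, as promised in \cref{rem:opdomain}.

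The only real obstacle is a careful bookkeeping of domains — specifically ensuring that the quadratic form identity actually upgrades to an operator identity on $\mathcal F_{\mathrm{fin}}(\mathscr D)$, and that $\iota_g$ preserves this subspace so that it can serve simultaneously as a core for both $A$ and $H_g$. Both points are handled cleanly by the nilpotence of $a(g)$ on each particle sector and by the Riesz representation argument already indicated in \cref{rem:opdomain}.
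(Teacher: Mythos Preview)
Your argument is correct and follows the same overall architecture as the paper's: show that $H_g$ and $\iota_g^*\,\mathrm{d}\Gamma(\varpi)\,\iota_g$ agree on $\mathcal F_{\mathrm{fin}}(\mathscr D)$, verify that $\iota_g\mathcal F_{\mathrm{fin}}(\mathscr D)$ is a core for $\mathrm{d}\Gamma(\varpi)$, and conclude by maximality of self-adjoint operators. The one noteworthy difference is in how the core property is established. The paper only records the inclusion $\iota_g\mathcal F_{\mathrm{fin}}(\mathscr D)\subset \mathcal F_{\mathrm{fin}}(\mathscr D)$ and then invokes Nelson's analytic vector theorem (using local boundedness of $\varpi$ to get analyticity). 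You instead observe that $e^{-a(g)}$ is an explicit inverse on $\mathcal F_{\mathrm{fin}}(\mathscr D)$, so $\iota_g\mathcal F_{\mathrm{fin}}(\mathscr D)=\mathcal F_{\mathrm{fin}}(\mathscr D)$ exactly, and then appeal directly to the standard fact that $\mathcal F_{\mathrm{fin}}(\mathscr D)$ is a core for $\mathrm{d}\Gamma(\varpi)$. Your route is slightly more elementary in that it sidesteps Nelson's theorem; the paper's route is a bit more robust, since it would still work if one only knew the inclusion rather than the equality of these subspaces.
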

\begin{proof}
  Let $\phi,\psi\in\mathcal F_{\mathrm{fin}}(\mathscr D)$.  Then
  \begin{align*}
    \langle \phi,H_g\psi\rangle_{g}
    = \langle e^{a(g)} \phi, \mathrm{d} \Gamma (\varpi) e^{a(g)} \psi \rangle_{\mathcal F}
    = \langle \iota_{g}\phi, \mathrm{d} \Gamma (\varpi) \iota_{g} \psi \rangle_{\mathcal F},
  \end{align*}
  so by density of $\mathcal F_{\mathrm{fin}}(\mathscr D)$ we have $H_g =
  \iota_{g}^*\mathrm{d}\Gamma(\varpi)\iota_{g}$ on $\mathcal F_{\mathrm{fin}}(\mathscr D)$.  To
  prove that the operators are in fact identical, it remains to prove that
  $\iota_g\mathcal F_{\mathrm{fin}}(\mathscr D)$ is a core for $\mathrm{d}\Gamma(\varpi)$.
  Since $\iota_g$ is unitary, density hereby follows from density of $\mathcal
  F_{\mathrm{fin}}(\mathscr D) $ in $\mathcal F_g$.  Since $\iota_g\mathcal
  F_{\mathrm{fin}}(\mathscr D) \subset \mathcal F_{\mathrm{fin}}(\mathscr D) $ and
  since all $\psi\in\mathcal F_{\mathrm{fin}}(\mathscr D) $ are analytic vectors
  for $\mathrm d\Gamma(\varpi)$ w.r.t.\ the usual Fock space norm $\|\cdot\|_0$ -- by our
  assumption that $\varpi$ is locally bounded -- the claim thus follows from
  Nelson's analytic vector theorem \citep[see, \emph{e.g.}][Theorem
  X.39]{reed1975II}.
\end{proof}
It remains to study the relation of the operator theoretic ground state of
the dressed operator $\iota_{g} H_{g} \iota^{*}_g$ with the algebraic ground state
$\omega_{\infty,g}$ of \cref{absdef:18alggibbsvhm} (where we made explicit the
dependence on the source).  Clearly, the ground state of the dressed operator
in $\mathcal F_g$ is the Fock vacuum $\epsilon_g(0)$; we now prove that $\omega_{\infty,g}$ in
its GNS representation corresponds to the Fock vacuum w.r.t.\ the scalar
product in $\mathcal F_g$.  We once more recall the usual representation of
the Weyl algebra in Fock space given by
\[ \pi_0(W(f)) = e^{ i \pi(a(f)+a^*(f))}, \quad f\in\mathscr D\; .  \] Most notably, it
is uniquely characterized by the fact that
\[ \pi_0(W(f))\epsilon_{0}(h) = e^{-\frac{\pi^2}2\|f\|_2^2 + i \pi \langle f,h\rangle_2}\epsilon_{0}(h + i\pi
  f)\; . \] Furthermore, in view of \cref{prop:dressing}, we have
\[ \langle \epsilon_{g}(h) , \epsilon_{g}(f) \rangle_g = e^{\langle h,g\rangle_2 + \overline{\langle f,g\rangle_2} + \langle
    h,f\rangle_2}\;.\] Thus, the canonical choice of the Weyl representation in
$\mathcal F_g$ is uniquely given by
\[ \pi_g(W(f))\epsilon_g(h) = e^{-\frac{\pi^2}2\|f\|_2^2 + i \pi \langle f,g\rangle_2 + i \pi \langle
    f,h\rangle_2}\epsilon_{g}(h + i \pi f)\;. \] Note that the density of
$\operatorname{span}\{\epsilon_{g}(h)\,|\,h\in\mathscr D\}$ in $\mathcal F_g$, which
follows from unitarity of $\iota_g$, ensures that $\pi_g$ is well-defined.  We
leave it to the reader to verify that this definition really provides a
$*$-homomorphism from the Weyl algebra to the unitaries on $\mathcal F_g$.

Let us finally verify that $\iota_g$ maps our dressed model to the vHM ground
state, cf. \cref{absdef:18alggibbsvhm}.
\begin{proposition}
  \label[proposition]{prop:5}
  If $g\in\mathscr D'$, then $\langle \epsilon_g(0),\pi_g(W(f))\epsilon_g(0)\rangle_g = e^{-\frac 12
    \|f\|_2^2 + 2\pi i \Re \langle f,g\rangle_2} $
\end{proposition}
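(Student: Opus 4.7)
The plan is to reduce the claim to a direct calculation using the two explicit formulas displayed just before the statement: the action
\[
  \pi_g(W(f))\,\epsilon_g(h) = e^{-\frac{\pi^2}{2}\|f\|_2^2 + i\pi\langle f,g\rangle_2 + i\pi\langle f,h\rangle_2}\,\epsilon_g(h+i\pi f)
\]
and the inner-product formula $\langle \epsilon_g(h),\epsilon_g(f')\rangle_g = e^{\langle h,g\rangle_2 + \overline{\langle f',g\rangle_2} + \langle h,f'\rangle_2}$. The first is built into the definition of $\pi_g$; the second is an immediate consequence of \cref{prop:dressing}\eqref{item:8} together with the unitarity of $\iota_g$, since $\iota_g\epsilon_g(f) = e^{\overline{\langle f,g\rangle_2}}\epsilon_0(f)$ reduces $\langle\cdot,\cdot\rangle_g$ on exponential vectors to the standard coherent-vector inner product in $\mathcal F_0$.

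First I would specialize the action formula to $h=0$, which kills the term $i\pi\langle f,h\rangle_2$ and gives
\[
  \pi_g(W(f))\,\epsilon_g(0) = e^{-\frac{\pi^2}{2}\|f\|_2^2 + i\pi\langle f,g\rangle_2}\,\epsilon_g(i\pi f).
\]

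Next, I would pair this against $\epsilon_g(0)$ using the exponential inner-product formula with $h=0$ and $f'=i\pi f$, which leaves only the middle term of the exponent. Recalling that the sesquilinear duality satisfies $\langle f,T\rangle_2 = T(\bar f)$ and is therefore antilinear in its first argument, one has $\overline{\langle i\pi f,g\rangle_2} = i\pi\,\overline{\langle f,g\rangle_2}$, so
\[
  \langle \epsilon_g(0),\epsilon_g(i\pi f)\rangle_g = e^{i\pi\,\overline{\langle f,g\rangle_2}}.
\]
Multiplying the two exponential factors and using $z+\bar z = 2\Re z$ then yields an exponent of the form $-\frac{\pi^2}{2}\|f\|_2^2 + 2\pi i\,\Re\langle f,g\rangle_2$, which is the asserted expression.

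There is essentially no deep obstacle: the computation is mechanical once one accepts the two formulas recalled above. The only point requiring a moment's care is the antilinearity convention for the duality pairing with a distribution, without which one would miss a sign and end up with $-2\pi i\,\Re\langle f,g\rangle_2$ in the final exponent. Nothing beyond \cref{prop:dressing} and the explicit definition of $\pi_g$ is invoked; in particular, no density, closability, or domain arguments enter.
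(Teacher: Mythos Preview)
Your proposal is correct and follows essentially the same two-line computation as the paper's own proof: apply the action formula for $\pi_g(W(f))$ at $h=0$, then evaluate $\langle\epsilon_g(0),\epsilon_g(i\pi f)\rangle_g$ via the exponential-vector inner product, and combine using $z+\bar z=2\Re z$. The only cosmetic difference is that you spell out the antilinearity step for $\overline{\langle i\pi f,g\rangle_2}$ explicitly, which the paper leaves implicit; note also that both your computation and the paper's yield $-\tfrac{\pi^2}{2}\|f\|_2^2$ in the exponent, so the $-\tfrac12\|f\|_2^2$ appearing in the statement is a typographical slip in the paper rather than a defect in your argument.
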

\begin{proof}
  This directly follows from the above observations, by
  \begin{align*}
    \langle \epsilon_{g}(0) , \pi_g(W(f))\epsilon_g(0)\rangle_g
    &	= e^{-\frac {\pi^2}2 \|f\|_2^2 + i \pi \langle f,g\rangle_2} \langle \epsilon_g(0) , \epsilon_g( i \pi f) \rangle _g 
    = e^{-\frac 12 \|f\|_2^2 + i\pi \langle f,g\rangle_2 +\overline{  \langle i \pi f,g\rangle_2}}
        \\ &
    = e^{-\frac 12 \|f\|_2^2 + 2\pi i \Re \langle f,g\rangle_2}. \qedhere
  \end{align*}
\end{proof}

\subsection*{Acknowledgements}
\label{sec:acknowledgements}

M.F.\ acknowledges the support of PNRR Italia Domani and Next Generation EU
through the ICSC National Research Centre for High Performance Computing, Big
Data and Quantum Computing; he also acknowledges the support by the Italian
Ministry of University and Research (MUR) through the grant “Dipartimento di
Eccellenza 2023-2027” of Dipartimento di Matematica, Politecnico di Milano,
and the PRIN 2022 grant ``OpeN and Effective quantum Systems (ONES)''.

{\footnotesize \bibliography{bib,00lit}
}
\end{document}